\DeclareMathAlphabet{\mathpzc}{OT1}{pzc}{m}{it}
\newtheorem{theorem}{Theorem}[section]
\newtheorem{proposition}[theorem]{Proposition}
\newtheorem{postulate}[theorem]{Postulate}
\newenvironment{proof}[1][Proof.]{\begin{trivlist}
\item[\hskip \labelsep {\bfseries #1}]}{\end{trivlist}}
\newenvironment{definition}[1][Definition.]{\begin{trivlist}
\item[\hskip \labelsep {\bfseries #1}]}{\end{trivlist}}
\newenvironment{example}[1][Example.]{\begin{trivlist}
\item[\hskip \labelsep {\bfseries #1}]}{\end{trivlist}}
\newenvironment{remark}[1][Remark.]{\begin{trivlist}
\item[\hskip \labelsep {\bfseries #1}]}{\end{trivlist}}
\newenvironment{fact}[1][Fact.]{\begin{trivlist}
\item[\hskip \labelsep {\bfseries #1}]}{\end{trivlist}}
\begin{document}

\begin{center}
\title{Decoherent histories of spin networks}
\author{David P.B. Schroeren}
\date{\today}
\affiliation{Balliol College, Oxford OX1 3UH, UK}
\begin{abstract}
The decoherent histories formalism, developed by Griffiths, Gell-Mann, and Hartle \citep{Hartle:1992as, GellMann:2006uj, GellMann:2011dt, Griffiths2003} is a general framework in which to formulate a timeless, `generalised' quantum theory and extract predictions from it. Recent advances in spin foam models allow for loop gravity to be cast in this framework. In this paper, I propose a decoherence functional for loop gravity and interpret existing results \citep{Bianchi:2010zs, Bianchi:2011ym} as showing that coarse grained histories follow quasiclassical trajectories in the appropriate limit.\footnote{This paper was published in \textit{Foundations of Physics}; the final publication is available at \href{http://link.springer.com/article/10.1007/s10701-013-9698-4}{http://link.springer.com/}. Please cite the published version only.}
\end{abstract}

\maketitle

\end{center}

It has recently been shown that the space of boundary spin networks of spin foams is given by the kinematical Hilbert space of canonical loop gravity, in the context of the so-called EPRL-model \cite{Engle:2007wy}, and independently via operator spin foams \cite{Kaminski:2009fm}. Furthermore, there is good reason to believe that spin foams reproduce dynamics of canonical loop gravity, although at present there is no proof of this. Nonetheless, these developments allow for a tentative novel perspective on the problem of decoherence in quantum gravity.

In this paper, I focus on Hartle's definition of the class operator via path integrals \cite{Hartle:1992as}. Other proposals, most notably by J. Halliwell \emph{et al.} \citep{Halliwell2009, Halliwell2009a, Halliwell2010} involving a complex potential, as well as possible implications of the Zeno-effect for the class operators presented here will be the subject of an upcoming paper \cite{Schroeren2013}.

The implications of the decoherent histories framework for the foundations of physics have been widely discussed, for instance with respect to quantum measurement problem \citep{Wallace2010, Hartle2010, Halliwell2010a, Griffiths1995, Griffiths1984}, the arrival time problem \citep{Halliwell1998, Yearsley2011, Wallden2008}, as well as the problem of time in quantum gravity \citep{Anderson2012, Wallden2008, Gambini2005}. While the results of this paper are mostly formal, their foundational implications should be the subject of future work.

This paper is organised as follows: First, I give a brief review of the kinematics, dynamics, and the coheren states formalism of loop gravity. In section \ref{properties}, I review the decoherent histories formalism in its general form, before applying it to loop gravity in section \ref{spinnetworks}. In section \ref{coarsegraining} I discuss two types of coarse graining in loop gravity and go on to show that, under certain approximations, coarse grained histories follow quasi-classical trajectories in section \ref{qchistories}.

\section{The theory}
Here, I define the relevant aspects of loop quantum gravity (LQG). For comprehensive derivations of these expressions, see, for instance, \citep{Rovelli:2011eq, Dona:2010hm, thiemann2007modern} and references therein.
\subsection{Kinematical state space}
\begin{definition}
The diffeomorphism-invariant kinematical Hilbert space $\mathcal{H}_{kin}$ of LQG is defined by 
\begin{equation}
	\mathcal{H}_{kin} = \oplus_{\Gamma} \mathcal{H}_{\Gamma}
\label{eq:diffvhilb}
\end{equation}
where the sum runs over abstract graphs $\Gamma$ composed of a set of $L$ links $l$ and a set of $N$ nodes $n$, and the gauge-invariant Hilbert space $\mathcal{H}_{\Gamma}$ is defined by
\begin{equation}
	\mathcal{H}_{\Gamma} = L_2[SU(2)^L/SU(2)^N].
\label{eq:ginvhilb}
\end{equation}
which is the space of square-integrable functions on $SU(2)^L$ that are invariant under the transformations
\begin{equation}
	\psi(h_l) \rightarrow \psi(g_{s(l)}h_lg_{t(l)}^{-1}),
\label{eq:gaugetransform}
\end{equation}
where $s(l)$ and $t(l)$ denote the \emph{source} and \emph{target} of $l$, respectively, $g_{n} \in SU(2)^N$ and $h_l$ is the \emph{colouring} of the link $l$.
\end{definition}
\begin{definition}
The \emph{spin network basis} is a basis of the space $\mathcal{H}_{kin}$, elements of which are labelled by a graph $\Gamma$ and colourings $\sigma$ of links $l$ by $SU(2)$-irreps $j_l$ and nodes $n$ by intertwiners $\text{v}_n$;\footnote{See \cite{Rovelli:2011eq} for a description of this formalism.}  explicitly,
\begin{equation}
	\psi_{\Gamma, j_l, \text{v}_n}(h_l) = \left ( \otimes_l d_{j_l} D^{j_l}(h_l) \right ) \cdotp \left ( \otimes_n \text{v}_n \right ),
\label{eq:snstates}
\end{equation} 
\end{definition}
The diffeomorphism-invariant Hilbert space $\mathcal{H}_{kin}$ is the \emph{quantum configuration space} of loop gravity. The states $\psi_{\Gamma, j_l, \text{v}_n}(h_l) \in H_{kin}$ are the quantum states of a three-geometry or \emph{boundary states} of a four-region. Thus, $\mathcal{H}_{kin}$ corresponds to $\mathcal{H}^*_{\text{f}} \otimes \mathcal{H}_{\text{i}}$ of the initial and final states in nonrelativistic quantum theory. The $(\Gamma, j_l, \text{v}_n)$ are the `quantum numbers' of spin network states.

\subsection{Dynamics}
The spin foam programme is an attempt to obtain dynamics for spin networks via a path integral type formalism. Here I briefly recount the basic aspects of the construction, following \cite{Rovelli:2011eq} \cite{Perez2009}.

\begin{definition}
A \emph{two-complex} $\mathcal{C}$ is a combinatorial object composed of faces, edges and vertices. The boundary $\partial \mathcal{C}$ is a (possibly disconnected) graph $\Gamma$, whose links are edges of $\mathcal{C}$ bounding a single face and whose nodes are vertices of $\mathcal{C}$ bounding a single internal edge.
\end{definition} 

\begin{definition} Let $\mathcal{C}$ denote two-complex with $\Gamma = \partial \mathcal{C}$. For a group $G$, a \emph{spin foam} is a pairing $(\mathcal{C}, \sigma)$ of the two-complex $\mathcal{C}$ with a colouring $\sigma$, i.e. a labelling of each face $f$ with an irreducible representation $\rho_f$ of $G$ and a labelling of each edge $e \notin \Gamma$  with an intertwiner $\iota_e$.
\end{definition}
\begin{definition} The transition amplitude associated to a boundary spin network state $\psi$ with quantum numbers $(\Gamma, j_l$, $\text{v}_n)$ for 4d Lorentzian LQG where $G = SL(2,\mathbb{C})$ is given by:
\begin{align}
	\langle W | \psi \rangle &=& W(\Gamma, j_l, \text{v}_n) = \lim_{\mathcal{C} \rightarrow \infty, \partial \mathcal{C}= \Gamma} W_{\mathcal{C}}(j_l, \text{v}_n) \nonumber \\ 
	&=& \sum_{\mathcal{C}, \partial \mathcal{C}= \Gamma} W_{\mathcal{C}}(j_l, \text{v}_n),\qquad \qquad \qquad \qquad
\label{eq:fullphysicalampl}
\end{align}
where the last equality follows from the fact that, due to diffeomorphism invariance, summing is refining if cylindrical consistency holds (cf. Smerlak and Rovelli in \cite{Rovelli:2010qx}). The transition amplitude truncated to a two-complex $\mathcal{C}$ with boundary $\partial \mathcal{C}$ in expression (\ref{eq:fullphysicalampl}) are given by
\begin{align}
	 W_{\mathcal{C}}(j_l, \text{v}_n) =  \sum_{j_f,\text{v}_e} \prod_f (2j_f + 1) \prod_v A_v(j_f, \text{v}_n), 
	 \label{eq:transitionampl} 
\end{align}
where $\mathcal{C}$ is a two-complex bounded by the graph $\partial \mathcal{C}$, with faces $f$, edges $e$ and vertices $v$. The vertex amplitude $A_v(j_{f}, \text{v}_e)$ is given by
\begin{equation}
	A_v(j_{f}, \text{v}_e) = tr[\otimes_{e\in v}(f_{\gamma} \text{v}_n)],
\label{eq:vertexampl}
\end{equation}
where $f_{\gamma}$ is defined in \cite{Rovelli:2011eq}. 
\end{definition}
The quantities $\langle W | \psi \rangle$ are complex numbers that will generally depend on properties of the boundary spin network $\psi$, for instance its graph and colouring. Equation (\ref{eq:fullphysicalampl}) can be understood as \emph{projecting onto the physical states} among the kinematical states, or alternatively as \emph{projecting out the remaining gauge symmetry}, the resulting states of which are the physical states (which have yet to be shown to lie in the kernel of the Hamiltonian constraint - this is hoped for but as yet unproven). 
\begin{example}
It will be instructive to consider the case where the boundary spin network is disconnected into two connected components on graphs $\Gamma, \Gamma'$, respectively, which I denote by $|\Gamma, j_l, \text{v}_n \rangle$ and $| \Gamma', j'_l, \text{v}'_n \rangle$. Then, the transition amplitude for this process can be written as
\begin{align}
		W(\Gamma \cup \Gamma', j_l, j'_l, \text{v}_n, \text{v}'_n) =	  \langle \Gamma, j_l, \text{v}_n  | \Gamma', j'_l, \text{v}'_n \rangle_{phys}
\label{eq:fghistspinnet}
\end{align}
where $W(\Gamma \cup \Gamma', j_l, j'_l, \text{v}_n, \text{v}'_n)$ denotes the spin foam sum over all two-complices that have $\Gamma \cup \Gamma'$ as their boundary, and $\langle \cdot | \cdot \rangle_{phys}$ denotes the physical inner product. One can think of the primed state as the `final' state and of the unprimed state as the `initial state'. However, in the fundamental theory there is no relation such as `earlier than' and `later than', and in general, the boundary state may be disconnected in more than two connected components, or not disconnected at all. Hence, this terminology should be taken with caution.
\end{example}
\subsection{Physical picture}
Let me illustrate the above using a language similar to transition amplitudes in ordinary quantum theory, following \cite[pp.~320]{rovelli2004quantum} and \cite{Oriti:2001qu}.

Suppose that a system's dynamical variable is denoted by $x$. The transition between two states $(q_f,T)$ and $(q_i,0)$ is given by the so-called Feynman path integral
\begin{equation}
\label{eq:feynmanpathintegral}
	\langle q_\text{f}, T | q_i, 0 \rangle = \int_{q_\text{f}, q_i} \delta q e^{i S[q(\tau)]},
\end{equation}
a sum over paths $q(\tau)$ that start at $q_i$ and end at $q_f$. $S[q(\tau)]$ is the action of this path. This can be defined in the Hamiltonian (canonical) theory by expressing the time evolution operator $e^{-iHt}$ as a product of operators for $N$ small time intervals, before taking the limit $dt = T/N \rightarrow 0$ or $N\rightarrow 0$.
\begin{align}
	\int_{q_\text{f}, q_\text{i}} \delta q e^{i S[q(\tau)]} = \lim_{N \rightarrow \infty} \int dq_1...dq_{N-1} \nonumber \\
	\langle q_\text{f} | e^{-iH \frac{T}{N}}|q_{N-1} \rangle \langle q_{N-1} | e^{-iH \frac{T}{N}} | q_{N-2} \rangle \nonumber \\ ... \langle q_1 | e^{-iH \frac{T}{N}} | q_\text{i} \rangle
\label{eq:pathintegraldecompose}
\end{align}

The formal analogy of this equation in LQG is (\ref{eq:fullphysicalampl}). This can be motivated as follows: Suppose that the Hamiltonian constraint operator $H = \int dx H(x)$ (where $x$ is a label for configuration space variables) has a non-negative spectrum, and consider the transition between two spin network states $\psi, \psi'$. The ``evolution operator'' is given by 
\begin{equation}
	P = \lim_{t\rightarrow \infty} e^{-i\int dx H(x) t}.
	\label{eq:evop}
\end{equation}
Thus, the full physical transition (\ref{eq:fullphysicalampl}) between $\psi, \psi'$ can be written as
\begin{equation}
	\langle \psi | \psi' \rangle_{phys} = \lim_{t\rightarrow \infty} \langle \psi | e^{-i\int dx H(x) t} | \psi' \rangle_{kin}
\label{eq:evopampl}
\end{equation}
Since $H$ diff-invariantly generates 4d-propagation, we can drop the infinite limit:(cf. \cite[p.324]{rovelli2004quantum})
\begin{equation}
	\langle \psi | \psi' \rangle_{phys} = \langle \psi | e^{-i\int_0^1 dt\int dx H(x) t} | \psi' \rangle_{kin}
\label{eq:evopampl2}
\end{equation}
Expanding this, we get
\begin{align}
	\langle \psi | \psi' \rangle_{phys} =  \lim_{N \rightarrow \infty} \sum_{\phi_1...\phi_{N-1}} \langle \psi | e^{-i\int dx H(x) dt} | \phi_{N-1} \rangle \nonumber \\
	 ... \langle \phi_1 | e^{-\int dx H(x) dt}| \psi' \rangle, \qquad \qquad \qquad
\label{eq:evopampl3}
\end{align}
wherein all $\langle \cdot |\cdot \rangle$ denote kinematical inner products. Now, expanding the exponential around small $dt$ for fixed $N$ yields terms equivalent to histories with lower $N$. Hence, in loop gravity, the infinite limit of a sum over spin network histories becomes simply a discrete sum over spin network histories \emph{with arbitrary length}. The contribution of each ``step'' along the history is given by a vertex amplitude (\ref{eq:vertexampl}), that is 
\begin{equation}
	A_{v}(\phi_{N+1},\phi_{N}) \sim \langle \phi_{N+1} | e^{-i\int dx H(x) dt} | \phi_{N} \rangle.
\label{eq:vertexamplevop}
\end{equation}

Hence, spin foams are sums over discrete spin network histories $(\phi_N,...,\phi_1)$ for all possible configurations and two-complices. The histories are generated by individual, discrete steps, since the Hamiltonian operator acts only on the nodes of a spin network (roughly, by creating an arc in its vicinity, cf. \cite{thiemann2007modern} for details). The amplitude of each step is given by (\ref{eq:vertexamplevop}), the matrix element of $H$; multiples of such vertex contributions (up to weight factors and face amplitudes) make up the spin foam amplitude (\ref{eq:fullphysicalampl}).

\subsection{Coherent states}
Generic states of a given geometry are given by linear superpositions of spin network states. Coherent states are introduced as a means to `dequantise' a quantum theory; that is, to construct states which are peaked around values of the conjugate variables of the corresponding classical system. The precise definition of coherent states for LQG can be found in \cite{thiemann2007modern}; here, it shall suffice to note that LQG coherent states are of the form 
\begin{equation}
	\psi_{H_l}(h_l) = \langle h_l | \psi_{H_l} \rangle, 
\label{eq:holcoh}
\end{equation}
as shown in \cite{Bianchi:2009ky}. The $\psi_{H_l}(h_l)$ are holomorphic functions of $H_l \in SL(2,\mathbb{C})$. These are Thiemann's complexifier coherent states in the case where the complexifier is the $SU(2)$ Laplacian $\Delta_{SU(2)}$ with eigenvalues of the form $j(j+1)$. Explicitly,
\begin{equation}
	\psi_{H_l}(h_l) = \int_{SU(2)}{dg_n \otimes_{l \in \Gamma} K_t(g_{s(l)}H_l g^{-1} h_l^{-1}}),
\label{eq:coherent}
\end{equation} 
are coherent spin-network states, where $K_t$ denotes the analytic continuation to $SL(2,\mathbb{C})$ of the $SU(2)$ heat kernel, given by
\begin{equation}
	K_t(g) = \sum_j (2j+1) e^{-j(j+1)t} Tr[D^j(g)],
\label{eq:heatkernel}
\end{equation} where the parameter $t$ is the \emph{heat kernel time}. The states (\ref{eq:holcoh}) have the usual peakedness properties and have a small spread around extrinsic and intrinsic curvature.

The truncated spin foam with a boundary coherent state $\psi_{H_l}$ given by
\begin{equation}
	W_{\mathcal{C}}(H_l) = \sum_{H'_l} \prod_f (2j_f + 1) \prod_v A_v(H'_l),
\label{eq:cohampl}
\end{equation}
where (see \cite[p.~18]{Rovelli:2011eq})
\begin{equation}
	A_v(H_l) = \int_{SL(2,\mathbb{C})^N} d\tilde{g}_n \prod_l P(H_l, g_{s(l)}g_{t(l)}^{-1}),
\label{eq:vertexampl1}
\end{equation}
and, for $Y_{\gamma}$ as defined in \cite[p.~12]{Rovelli:2011eq},
\begin{align}
	P(H,g) =  \sum_j (2j+1) e^{- j(j+1)t}  \nonumber \\ 
	\times \text{ tr}[D^j(H) Y_{\gamma}^{\dagger} D^{(\gamma j, j)}(g) Y_{\gamma}], \label{eq:projector1}
\end{align}
where In the above, $D^j(H)$ denotes the analytic continuation to $SL(2,\mathbb{C})$ of the $SU(2)$ Wigner matrix.

Explicitly, we can decompose any $SL(2,\mathbb{C})$-label in equation (\ref{eq:coherent}) as follows: (cf. \cite[p.~17]{Rovelli:2011eq})
\begin{equation}
	H_l = e^{2itL_l}h_l,
\label{eq:sl2clabel}
\end{equation}
where $h_l\in SU(2)$ is the holonomy along an edge encoding the intrinsic curvature and $L_l \in \mathfrak{su}(2)$ encodes the extrinsic curvature. The operators $\hat{h}_l$ and $\hat{L}_l$ have expectation values given by
\begin{equation}
	\frac{\langle \psi_{H_l}|\hat{h}_l | \psi_{H_l} \rangle}{\langle \psi_{H_l}| \psi_{H_l} \rangle} = h_l, \text{ } \frac{\langle \psi_{H_l}|\hat{L}_l | \psi_{H_l} \rangle}{\langle \psi_{H_l}| \psi_{H_l} \rangle} = L_l,
\label{eq:expval}
\end{equation}
and the spread of the $|\psi_{H_l}\rangle $ around these values goes to zero with $\hbar$; that is, the $|\psi_{H_l}\rangle $ are coherent states peaked on values of $h_l, L_l$.

The full physical transition amplitude for the coherent state $| \psi_{H_l}\rangle = | \Gamma, H_l \rangle $ labelled by $H_l \in SL(2,\mathcal{C})$ on the boundary graph $\Gamma$ is given by 
\begin{align}
	\langle W | \psi_{H_l} \rangle = W(\Gamma, H_l)& =& \lim_{\mathcal{C} \rightarrow \infty, \partial \mathcal{C}= \Gamma} W_{\mathcal{C}}(H_l) \nonumber \\ 
	&=& \sum_{\mathcal{C}, \partial \mathcal{C}= \Gamma} W_{\mathcal{C}}(H_l).
\label{eq:fullphysicalamplcoh}
\end{align}
\section{Generalised quantum mechanics}	\label{properties}
The decoherent (or consistent) histories formalism has been developed as a means of clarifying how quantum mechanics assigns probabilities to macroscopic, mutually incompatible events, e.g. measurement outcomes. More precisely, according to the decoherent histories formalism, the predictions of quantum theory are \emph{in fact} assignments of probabilities to alternative coarse grained histories of a closed system, for an exhaustive set of such histories. Consistency with probability calculus is ensured by assigning probabilities only to \emph{decoherent} or \emph{consistent} sets of histories, i.e. sets for whose members interference can be neglected. 

The aim of this programme, then, is twofold: first, to reformulate quantum theory so as to be applicable to closed systems such as the universe, eliminating an unphysical distinction between the `observer' and the `observed' of Copenhagen quantum theory; secondly, to address the `problem of time' in quantum gravity. 

In this section, I introduce what Hartle and Gell-Mann refer to as \emph{generalised quantum mechanics}, essentially following their \cite{GellMann:2011dt} and Hartle's \cite{Hartle:1992as}. One way to present this is to describe the set of alternatives at any given moment of time by a set of orthogonal Heisenberg projectors $\{ P^k_{\alpha_k}(t_k)\}$ which, for a sequence of times $(t_1,...,t_n)$ define a set of alternative histories for the system, specified by chains of alternatives $(\alpha_1,...,\alpha_n)$. However, the notion of a `sequence' of times is not well defined in a quantum theory of gravity as there is no covariant notion of an alternative at any instance of time, so I move directly to the Feynman path integral version of the formalism, anticipating its adaptation to loop gravity in the final section.

\subsection{Decoherent histories framework}
\begin{definition} Let $\mathcal{H}$ denote a Hilbert space describing a quantum system. The set $\mathfrak{f}$ of \emph{fine grained histories} $f$ is specified by giving a maximally detailed description of the system in terms of its state space $\mathcal{H}$ and its dynamics. In standard quantum theory, this is given by the unitary evolution of rays in a Hilbert space according to the Schrödinger equation; in a quantum field theory, by histories of field configurations.
\end{definition}

\begin{definition}
Let $\mathcal{H}$, $\mathfrak{f}$ be as before. A \emph{coarse graining} on the fine grained history space $\mathfrak{f}$ is defined as any partition of $\mathfrak{f}$ into an exhaustive set $\{ c_{\alpha} \} $ of exclusive classes $c_{\alpha}$ which obey
\begin{equation}
	\cup_{\alpha}c_{\alpha} = \mathfrak{f}
\label{eq:coarsegr}
\end{equation}
Each class $c_{\alpha}$ is a \emph{coarse grained history}, and the set of classes $\{ c_{\alpha} \} $ is a set of coarse grained histories, which I will also refer to simply as $\{\alpha \}$. Further partitioning can be imposed on the set of coarse grained histories; a process which terminates with a set that contains only a single member (the trivial coarse graining). A \emph{fine graining} is defined analogously.
\end{definition}
\begin{fact}
The operations of coarse graining and fine graining define a partial ordering on the set of all sets of histories.
\end{fact}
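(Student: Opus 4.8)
The plan is to verify directly that the coarse-graining relation satisfies the three defining properties of a partial order — reflexivity, antisymmetry, and transitivity. First I would fix notation: for two sets of histories, i.e.\ two partitions of $\mathfrak{f}$, $A = \{c_\alpha\}$ and $B = \{d_\beta\}$, write $A \preceq B$ whenever $A$ is a coarse graining of $B$, that is, whenever each class $c_\alpha$ is a union $c_\alpha = \cup_{\beta \in I_\alpha} d_\beta$ of classes of $B$, with the index sets $I_\alpha$ themselves partitioning the index set of $B$. The first check is that this $\preceq$ is exactly the relation generated by the iterated ``further partitioning'' of the preceding definition: a single partitioning step takes $A$ to some $B$ with $A \preceq B$; a finite chain of such steps again yields $A \preceq B$, since a union of unions is a union; and conversely any $B$ refining $A$ in the above sense is reached from $A$ by the single step that splits each $c_\alpha$ into its constituent $d_\beta$. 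Fine graining is then simply the converse relation, so coarse graining and fine graining induce the same ordering read in the two directions, and it suffices to treat $\preceq$.

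Reflexivity is immediate from the trivial (zero-step) partitioning, under which every $c_\alpha$ is the one-term union of itself, so $A \preceq A$. For transitivity, given $A \preceq B \preceq C$ with $A=\{c_\alpha\}$, $B=\{d_\beta\}$, $C=\{e_\gamma\}$, each $c_\alpha$ is a union of $d_\beta$'s and each of those $d_\beta$ is a union of $e_\gamma$'s, so $c_\alpha$ is a union of $e_\gamma$'s; since $B$ and $C$ are both partitions of $\mathfrak{f}$, no $e_\gamma$ is repeated or omitted in the resulting index sets, whence $A \preceq C$. Antisymmetry carries the only real content: suppose $A \preceq B$ and $B \preceq A$. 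Take any $c_\alpha \in A$; by $A \preceq B$ it is a union of blocks $d_\beta$ of $B$, and by $B \preceq A$ each such $d_\beta$ is in turn a union of blocks of $A$. But $d_\beta \subseteq c_\alpha$ and distinct blocks of $A$ are disjoint, so the only block of $A$ that can occur in that union is $c_\alpha$ itself; hence $d_\beta = c_\alpha$. As this holds for every $d_\beta$ making up $c_\alpha$, the block $c_\alpha$ is itself a block of $B$; by symmetry every block of $B$ is a block of $A$, so $A = B$.

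This is, at bottom, the standard observation that the partitions of a set ordered by refinement form a poset (indeed a complete lattice), so the work is entirely bookkeeping with the definitions; I would also note in passing that two ``crossing'' partitions are $\preceq$-incomparable, which is why the ordering is only partial rather than total. The one — quite mild — obstacle is making the relation well posed on ``the set of all sets of histories'': one must confirm that a coarse graining is individuated by the partition it induces on $\mathfrak{f}$, so that $\preceq$ genuinely relates set-theoretic objects and antisymmetry is literal equality rather than equality up to relabelling of the $\alpha$'s, and that the exhaustiveness and exclusivity conditions of the definition are preserved by both partitioning and its inverse — which they are by construction. No input from the dynamics or from properties of $\mathcal{H}$ is needed; the statement is purely combinatorial.
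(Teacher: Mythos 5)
Your proof is correct: it is the standard verification that partitions of $\mathfrak{f}$ ordered by refinement form a poset, and the one step with real content --- antisymmetry --- you handle properly via disjointness of blocks. The paper states this Fact without any proof at all, so there is nothing to compare against; your write-up supplies exactly the bookkeeping the paper omits, the only caveat being that antisymmetry requires the classes $c_{\alpha}$ to be nonempty, which is implicit in the paper's notion of an exhaustive set of exclusive classes.
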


Associated with a coarse graining is the transition amplitudes of coarse grained histories, defined as follows:
\begin{definition}
The \emph{class operator} $C_{\alpha}$ corresponding to a coarse grained history $c_{\alpha}$ is defined as
\begin{equation}
	\langle \psi_{\text{f}} | C_{\alpha} | \psi_{\text{i}} \rangle := \int_{\psi_{\text{f}}, \alpha, \psi_{\text{i}}} \mathcal{D}\phi e^{i\mathcal{S}(\phi)},
\label{eq:classop}
\end{equation}
where $\mathcal{S}$ is the action, $\phi$ ranges over configuration space, $\psi_{\text{i}}, \psi_{\text{f}} \in \mathcal{H}$ are \emph{initial} and \emph{final} states, and $\mathcal{D}\phi$ denotes the functional integration measure. The state obtained by the action of the class operator on generic initial states $|\psi \rangle$ is called the \emph{branch state vector}  $|\psi_{\alpha} \rangle := C_{\alpha} |\psi \rangle$.
\end{definition}

It is important to note that the space of fine grained histories is deterministic and therefore \emph{trivially decoherent}. However, this is generally not true of coarse grained histories.

Probabilities are assigned only to \emph{decoherent} histories - that is, to histories whose interference is negligible. Interference is measured by the decoherence functional, which is defined as follows.\\

\begin{definition}
Let $\{\alpha \}$ denote a set of histories (possibly coarse grained) and let $\alpha, \alpha' \in \{\alpha \}$. Further, denote by $|\psi_{\alpha} \rangle = C_{\alpha} |\psi \rangle$ the \emph{branch state vector} of $\alpha$.  The \emph{decoherence functional} $D(\alpha, \alpha'):= \langle \psi_{\alpha} | \psi_{\alpha'} \rangle$ is a complex-valued functional $D: \{\alpha \} \times \{\alpha \} \rightarrow \mathbb{C}$ which obeys

\begin{enumerate}

	\item \emph{Hermiticity. } $D(\alpha', \alpha)  = D^*(\alpha, \alpha')$
	\item \emph{Positivity. } $D(\alpha', \alpha)  \geq 0$
	\item \emph{Normalisation. } $\sum_{\alpha, \alpha'} D(\alpha', \alpha)  = 1$
	\item \emph{Superposition principle. } $D(\bar{\alpha}', \bar{\alpha})  = \sum_{\alpha' \in \bar{\alpha}'} \sum_{\alpha \in \bar{\alpha}} D(\alpha', \alpha)$,

\end{enumerate}
for any history space $\{\bar{\alpha} \}$ which is a coarse graining of $\{\alpha \}$ and $\bar{\alpha}', \bar{\alpha} \in \{\bar{ \alpha} \}$.
\end{definition}
Importantly, condition (4) of the above entails that, given the decoherence functional for a fine grained history space, one can obtain the decoherence functional for any coarse graining via the superposition principle (cf. \cite[p.~52]{Hartle:1992as}).

\begin{postulate} \emph{(Medium decoherence condition)}\\
The subset of the set of all sets of alternative coarse grained histories to which probabilities are assigned is picked out by the \emph{medium decoherence condition} 
\begin{equation}
	D(\alpha, \alpha') \approx 0, \text{ }\forall \alpha \neq \alpha' \in \{ \alpha \}
\label{eq:medium}
\end{equation}
These histories are referred to as \emph{decoherent}.
\end{postulate}
Accurately speaking, histories that obey equation (\ref{eq:medium}) decohere \emph{approximately}, and thus the medium decoherence condition ensures that such histories conform to probability calculus only aproximately. However, as Hartle remarks, ``when we speak of approximate decoherence and approximate probabilities we mean decoherence achieved and probability sum rules satisfied beyond any standard that might be conceivably contemplated for the accuracy of prediction and the comparison of theory with experiment.'' (\cite[p.~20]{Hartle:1992as}.)

For a set of decoherent histories $\{ \alpha \}$, the probability for a particular history is given by the `diagonal' elements of the decoherence functional
\begin{equation}
	p(\alpha) = D(\alpha, \alpha)
\label{eq:prob}
\end{equation}

In addition, we can observe the following
\begin{fact}
Let $\{ \alpha \}$, $\{ \bar{\alpha} \}$ as above. Probabilities defined by equation (\ref{eq:prob}) must be additive on disjoint sets of the sample space, that is
\begin{equation}
	p(\bar{\alpha}) = \sum_{\alpha \in \bar{\alpha}} p(\alpha).
\label{eq:disjadd}
\end{equation}
Furthermore, the probabilities given by equation (\ref{eq:prob}) obey probability calculus.
\end{fact}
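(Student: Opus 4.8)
The plan is to obtain both assertions purely from the four defining properties of the decoherence functional $D$ together with the medium decoherence condition (\ref{eq:medium}); no further input about loop gravity enters, since $C_\alpha$, $D$ and $p$ were defined abstractly. The only preliminary remark I would make is that the superposition principle (property 4) has a transparent origin: the class operator (\ref{eq:classop}) is a path integral \emph{restricted} to the class $c_\alpha$, and the sub-histories making up a coarser history $\bar\alpha$ partition $c_{\bar\alpha}$, so the restricted integrals add, $C_{\bar\alpha}=\sum_{\alpha\in\bar\alpha}C_\alpha$, whence the branch state vectors satisfy $|\psi_{\bar\alpha}\rangle=\sum_{\alpha\in\bar\alpha}|\psi_\alpha\rangle$ and, taking inner products, $D(\bar\alpha,\bar\alpha')=\sum_{\alpha\in\bar\alpha}\sum_{\alpha'\in\bar\alpha'}D(\alpha,\alpha')$. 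I will simply invoke this.

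For the additivity statement (\ref{eq:disjadd}), I would start from $p(\bar\alpha)=D(\bar\alpha,\bar\alpha)$, apply the superposition principle, and split the double sum into diagonal and off-diagonal parts,
\begin{equation}
p(\bar\alpha)=\sum_{\alpha\in\bar\alpha}D(\alpha,\alpha)+\sum_{\substack{\alpha,\alpha'\in\bar\alpha\\ \alpha\neq\alpha'}}D(\alpha,\alpha').
\end{equation}
The medium decoherence condition (\ref{eq:medium}) makes every term of the second sum vanish to the stated accuracy, leaving $p(\bar\alpha)=\sum_{\alpha\in\bar\alpha}p(\alpha)$. More generally, for two disjoint events $\bar\alpha_1,\bar\alpha_2$ in the sample space, their union is again a (coarser) history, and iterating the identity just obtained gives $p(\bar\alpha_1\cup\bar\alpha_2)=p(\bar\alpha_1)+p(\bar\alpha_2)$; the same argument extends to any countable disjoint partition, since such a partition is itself a coarse graining of $\{\alpha\}$.

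For the assertion that the $p(\alpha)$ obey probability calculus, I would then verify the Kolmogorov axioms one by one: non-negativity $p(\alpha)=D(\alpha,\alpha)\geq 0$ is the positivity property of $D$; (countable) additivity on disjoint events is the previous paragraph; and normalisation $\sum_\alpha p(\alpha)=1$ follows by taking $\{\bar\alpha\}$ to be the trivial one-element coarse graining, so that by the superposition principle and the normalisation property $\sum_\alpha p(\alpha)=D(\bar\alpha_{\mathrm{triv}},\bar\alpha_{\mathrm{triv}})=\sum_{\alpha,\alpha'}D(\alpha,\alpha')=1$, the off-diagonal terms again dropping out by (\ref{eq:medium}). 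Since a measure on a countable sample space is determined by non-negativity, normalisation and additivity, the $p(\alpha)$ constitute a probability measure, and in particular $0\le p(\alpha)\le 1$.

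The one genuinely delicate point — the main obstacle, such as it is — is not any individual step but the control of the \emph{approximate} equalities: the off-diagonal sums discarded above run over potentially a great many pairs $(\alpha,\alpha')$, so strictly one must argue that their accumulated magnitude stays far below any relevant threshold. I would not attempt a quantitative bound here; this is precisely the content of Hartle's remark quoted after (\ref{eq:medium}), and the Fact is to be understood, as throughout this framework, in the sense of approximate decoherence and approximate probabilities.
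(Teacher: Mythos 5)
Your argument is correct, and it is the standard Gell-Mann--Hartle derivation. Note that for this particular Fact the paper supplies no proof at all: it is asserted as part of the review of the generalised-quantum-mechanics framework, with the burden carried by the citation of Hartle's lectures, so there is no in-paper argument to compare against. Your chain of reasoning --- superposition principle to expand $p(\bar\alpha)=D(\bar\alpha,\bar\alpha)$ into a double sum, medium decoherence to discard the off-diagonal block, then positivity and normalisation of $D$ to supply the remaining Kolmogorov axioms --- uses only the four listed properties of the decoherence functional plus the medium decoherence postulate, exactly as intended. Two small points are worth making explicit. First, when you iterate additivity over a further coarse graining you implicitly use that a coarse graining of a medium-decoherent set is again medium-decoherent; this does follow from the superposition principle, since for $\bar\alpha\neq\bar\alpha'$ the classes are disjoint and every pair $(\alpha,\alpha')$ contributing to $D(\bar\alpha,\bar\alpha')$ has $\alpha\neq\alpha'$, but it deserves a sentence rather than being left tacit. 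Second, your closing caveat about the accumulated magnitude of the discarded off-diagonal terms is the right qualification, and is precisely the sense of ``approximate probabilities'' in the Hartle quotation the paper reproduces after the medium decoherence postulate; no quantitative bound is available at this level of generality, so stopping there is appropriate.
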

\subsection{Application to general relativity}
I follow section VIII.4 of \cite{Hartle:1992as} and consider the case without matter. In the Hamiltonian formulation of general relativity, the Einstein Hilbert action is written in a $3+1$ form in terms of lapse $N$, shift $N^i$, induced three-metric $q_{ij}$ - the configuration variable - and the conjugate momentum $\pi_{ij}$ determined by the extrinsic curvature of foliating three-surfaces (see chapter 1 of \cite{thiemann2007modern} for details). 

In this framework, fine grained histories are configurations of three-metrics between initial and final metrics $q_{ij}, q_{ij}'$ on the boundaries $S_f$ and $S_i$, respectively. For some coarse graining $\{c_{\alpha} \}$, Hartle defines the class operator as
\begin{equation}
	\langle q_{ij}' | C_{\alpha} | q_{ij} \rangle = \int_{\alpha} \delta q \delta \pi \delta N e^{iS[N^{\beta}, \pi_{ij}, q_{ij}]},
\label{eq:grclassop}
\end{equation}
where suitable measure factors have been supressed, $N^{\beta} = (N, N^i)$ and $S$ is the ADM action.

For a pure initial state $\psi$, the decoherence functional for sets $\{\psi_i \}, \{\phi_j\}$ of final and initial states, respectively, is then given as
\begin{equation}
	D(\alpha, \alpha') = \mathcal{N} \sum_{ij} p_i' p_j \langle \psi_j | C_{\alpha'} | \phi_i \rangle \langle \phi_i | C_{\alpha}^{\dagger} | \psi_j \rangle,
\label{eq:grdecfunc}
\end{equation}
where $p_i' p_j$ are the coefficients of the respective density matrices for initial and final states and $\mathcal{N}$ is a suitable normalization (cf. \cite[p.~143]{Hartle:1992as}).

As such, these expressions are purely formal, as the notion of a quantum state of a three-geometry is left unspecified, and the inner product between wave functionals $\phi := \phi(q_{ij})$ on superspace is not well-defined. 

These issues are resolved in loop gravity, where both the state of a three geometry as well as the inner product are well defined. 
\section{Decoherent histories of spin networks} \label{spinnetworks}
Given the formal presentation of the theory in section I, I now turn to the formulation of loop gravity as a generalised quantum mechanics described above. Most of this is a straightforward result of the covariant loop gravity dynamics of spin foams.
\subsection{Fine grained histories}
In loop gravity, fine-grained histories are given as histories of specific quantum states of three-geometries, rigorously defined by spin-network states. 

\begin{definition} \emph{(Fine grained histories.)} \\
Let $(\Gamma = \partial \mathcal{C}, j_l, \text{v}_n)$ be the quantum numbers specifying a spin network $\psi$, and denote the colouring of $\Gamma$ by $\sigma_{B} := (j_l, \text{v}_n)$. Its fine grained history is given by a spin foam $(\mathcal{C}, \sigma)$ given by a two complex $\mathcal{C}$ with boundary $\Gamma$, $N$ vertices $v$ and a colouring $\sigma$ of its faces, edges and vertices such that the colouring of the boundary graph is $\sigma_{B}$. The amplitude of the fine grained history $\psi$ is given by an $N$-product of vertex amplitudes $A_{v}$ as
\begin{equation}
	W_{\mathcal{C}}(\sigma_{B}, \sigma)= \prod_f (2j_f + 1) \prod_v^N A_v(j_f, \text{v}_n).
\label{eq:finegrained}
\end{equation}
Fine grained histories of spin networks can be thought of as `sweeping out' curves in the quantum configuration space $\mathcal{H}_{kin}$.
\end{definition}
\begin{remark}
Equation (\ref{eq:finegrained}) is identical to the truncated transition amplitude (\ref{eq:transitionampl}) on a two-complex $\mathcal{C}$, but without the sum over all possible configurations (i.e. colourings) of the two complex. That is, 
\begin{equation}
	\langle W | \psi(\sigma_{B}) \rangle = \sum_{\mathcal{C}} \sum_{\sigma}W_{\mathcal{C}}(\sigma_{B}, \sigma)
\label{eq:sumoverfgh}
\end{equation}
again gives the full physical transition amplitude for the boundary state $\psi(\sigma)$.
\end{remark}

\subsection{Coarse grained histories}
As Hartle points out, a coarse graining for the history space of a quantum theory of spacetime geometry fundamentally consists in specifying a set of assertions which partition the history space into classes of histories where the assertions are true and where they are false. That is, `every assertion about the universe [...] is the assertion that the history of the universe lies in a coarse-grained class in which the assertion is true and not in which it is false.' (\cite[p.~133]{Hartle:1992as}) Allowed coarse grainings thus amount to partitions of fine grained history space into an `exhaustive set of exclusive, \emph{diffeomorphism-invariant} classes.' (ibid.) In the case of covariant loop gravity, the following definition of coarse graining is natural:

\begin{definition} \emph{(Coarse graining.)}\\
Let $\mathfrak{f}$ denote the set of fine grained histories $f$ specified by physical transition amplitudes $W_{\mathcal{C}} ( \Gamma, j_l, \text{v}_n)$ for boundary spin networks $|\Gamma, j_l, \text{v}_n \rangle$ with $\partial \mathcal{C} \equiv \Gamma$. A \emph{coarse graining} of $\mathfrak{f}$ consists in specifying a list $\alpha$ of $n$ diff-invariant properties $\alpha := (\alpha_1,...,\alpha_n)$ such that the history space $\mathfrak{f}$ partitions into classes $c_{\alpha}$ for which every $f \in c_{\alpha}$ satisfies the properties encoded in $\alpha$.

\end{definition}
\begin{remark}
The above definition of coarse graining for loop gravity ensures that (1) the partition into classes $c_{\alpha}$ is diff-invariant by virtue of the properties being defined as diff-invariant and (2) that the classes $c_{\alpha}$ are exclusive, since the properties can be chosen such that there is no $f \in \mathfrak{f}$ such that $f$ possess the properties encoded in $\alpha$ and $f \notin c_{\alpha}$.
\end{remark}
\begin{remark}
A general coarse graining may, for instance, be imposed by considering all fine grained histories which `pass through' a region in the quantum configuration space $\mathcal{H}_{kin}$ at least once; that is, for which at least one spin network along its history lies in the region of the kinematical Hilbert space selected by the coarse graining.
\end{remark}

To motivate the following proposition of a class operator for LQG, consider the Hilbert space of a single particle $\mathcal{H} = L_2[\mathbb{R}]$ moving in one dimension. As discussed above, the full transition amplitude between two points $q_i, q_f$ is given by a sum over the contributions of all possible histories of the system
\begin{equation}
	\langle q_\text{f}, T | q_\text{i}, 0 \rangle = \int_{q_\text{f}, q_\text{i}} \delta q e^{i S[q(\tau)]},
\label{eq:oneparticleampl}
\end{equation}
where the states $| q,t \rangle$ are Heisenberg states and the functional integration ranges over all paths which connect the two points. Suppose that we coarse grain by partitioning the set of fine grained histories into those that pass through an interval $ \Delta \in \mathbb{R}$ of the real line at a fixed time $t$ and those that do not, the class operator is given by the path integral over all those paths which, at the time $t$, pass through the interval $\Delta$
\begin{equation}
	\langle q_\text{f}, T | C_{\alpha} | q_\text{i}, 0 \rangle = \int_{q_\text{f}, \alpha, q_\text{i}} \delta q e^{iS[q]},
\label{eq:oneparticleclassop}
\end{equation}
which can be written as the product of the Feynman path integrals from $q_i$ to $q'$ and from $q'$ to $q_f$, summed over all $q' \in \Delta$:
\begin{equation}
		\langle q_\text{f}, T | C_{\alpha} | q_\text{i}, 0 \rangle = \int_{\Delta} dq' \int_{q_\text{f}, q'} \delta q e^{iS[q]}\int_{q', q_\text{i}} \delta q e^{iS[q]}
\label{eq:oneparticleclassop2}
\end{equation}
That is, we sum over all those histories which satisfy a certain condition, e.g. of passing through an interval of the real line at a fixed time. This idea can be applied to loop gravity, noting that (1) the fine grained history of a spin network is given by a coloured two-complex with boundary, its amplitude by (\ref{eq:finegrained}), and due to diff-invariance, there is no time ordering on the properties encoded in $\alpha$\footnote{Notwithstanding Savvidou \cite{Savvidou:2004vg}.}; (2) due to diff-invariance, the notion of \emph{intermediate state} of a history does not make sense - rather, I will speak of \emph{bulk configurations} on which the conditions are imposed. Bulk configurations are specified by two-complices and face/edge colourings.
\begin{proposition} \emph{(Class operator.)}\\
Let $f \in \mathfrak{f}$ be as before, and suppose that the kinematical Hilbert space has the structure $\mathcal{H}_{kin} = \mathcal{H}_{\text{\emph{f}}}^* \otimes \mathcal{H}_{\text{\emph{i}}}$. Further suppose that $\psi_\text{\emph{f}} \otimes \psi_\text{\emph{i}} \in \mathcal{H}_{kin}$ denote the boundary state of a spin foam with transition amplitude $W(\psi_\text{\emph{f}}, \psi_\text{\emph{i}}) = \langle \psi_\text{\emph{f}} | \psi_\text{\emph{i}} \rangle_{phys}$. 

Let $\alpha = (\alpha_1,...,\alpha_n)$ denote a list of diff-invariant properties and $\{c_{\alpha}\}$ the associated exhaustive diff-invariant set of exclusive classes $c_{\alpha}$ of histories. The \emph{class operator} for this coarse graining is by matrix elements
\begin{eqnarray}
	\langle \psi_\text{\emph{f}} |C_{\alpha} | \psi_{\emph{i}} \rangle & = &\sum_{\mathcal{C}_{\alpha},\sigma_\alpha}W(\sigma_B, \sigma) \qquad \qquad \nonumber \\
	& = &\sum_{\mathcal{C}_{\alpha},\sigma_\alpha} \prod_f (2j_f + 1) \prod_v A_v(\sigma)
	\label{eq:coarsegraining}
\end{eqnarray}
where sum runs over all bulk configurations for which the properties $\alpha$ are satisfied.
\end{proposition}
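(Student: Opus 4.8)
The plan is to read off the claimed formula by specialising the abstract definition (\ref{eq:classop}) of the class operator to the loop-gravity setting, using the two identifications already in place: that in covariant loop gravity the configuration-space ``path integral'' $\int\mathcal{D}\phi\,e^{i\mathcal{S}(\phi)}$ at fixed boundary data is the spin foam sum $\sum_{\mathcal{C},\sigma}W_{\mathcal{C}}(\sigma_B,\sigma)$ of (\ref{eq:sumoverfgh}), and that a single fine grained history $f\in\mathfrak{f}$ is precisely one coloured two-complex $(\mathcal{C},\sigma)$ with boundary colouring $\sigma_B$, carrying the amplitude $W_{\mathcal{C}}(\sigma_B,\sigma)$ of (\ref{eq:finegrained}). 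Under this dictionary the ``sum over paths'' is literally a sum over the elements of $\mathfrak{f}$, so restricting the integral to a class $c_\alpha$ is nothing but restricting this sum to those $f$ lying in $c_\alpha$; the single-particle identity (\ref{eq:oneparticleclassop})--(\ref{eq:oneparticleclassop2}) is the template.

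First I would reduce to basis states: writing $\psi_{\mathrm f}\otimes\psi_{\mathrm i}$ as a finite combination of spin network states $|\Gamma,j_l,\mathrm v_n\rangle$ and using linearity of $C_\alpha$, it suffices to prove the formula when $\sigma_B=(j_l,\mathrm v_n)$ is a definite boundary colouring of $\Gamma=\partial\mathcal{C}$. Next I would unfold the Definition of coarse graining for loop gravity: $c_\alpha$ is exactly the set of fine grained histories — i.e. bulk configurations, meaning two-complices with their face/edge/vertex colourings extending $\sigma_B$ — satisfying the diff-invariant list $\alpha=(\alpha_1,\dots,\alpha_n)$. Hence
\[
\langle\psi_{\mathrm f}|C_\alpha|\psi_{\mathrm i}\rangle=\!\!\int_{\psi_{\mathrm f},\,\alpha,\,\psi_{\mathrm i}}\!\!\!\mathcal{D}\phi\,e^{i\mathcal{S}(\phi)}=\!\!\!\sum_{(\mathcal{C},\sigma)\in c_\alpha}\!\!\! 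W_{\mathcal{C}}(\sigma_B,\sigma),
\]
which on inserting (\ref{eq:finegrained}) is precisely (\ref{eq:coarsegraining}). The one structural difference from the single-particle case I would stress is that diff-invariance forbids factorising the class sum into an ``initial-to-intermediate'' times ``intermediate-to-final'' product as in (\ref{eq:oneparticleclassop2}); there is no time-sliced resolution of the identity, and ``intermediate states'' are replaced throughout by ``bulk configurations''.

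Then I would record the consistency checks that make this a legitimate coarse graining in the sense of Section~\ref{properties}. Diff-invariance of the partition $\{c_\alpha\}$ is inherited from the diff-invariance of $\alpha$ and exclusiveness is built in, as already noted in the Remark following the Definition of coarse graining; exhaustiveness, $\bigcup_\alpha c_\alpha=\mathfrak{f}$, then gives the sum rule $\sum_\alpha\langle\psi_{\mathrm f}|C_\alpha|\psi_{\mathrm i}\rangle=\sum_{\mathcal{C},\sigma}W_{\mathcal{C}}(\sigma_B,\sigma)=\langle\psi_{\mathrm f}|\psi_{\mathrm i}\rangle_{phys}$, so an exhaustive set of class operators coarse-grains to the full physical transition amplitude (\ref{eq:fullphysicalampl}); and under $\mathcal{H}_{kin}=\mathcal{H}_{\mathrm f}^*\otimes\mathcal{H}_{\mathrm i}$, exchanging the two boundary components produces $C_\alpha^\dagger$, which is what (\ref{eq:grdecfunc}) will need once the decoherence functional is assembled. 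These are routine once the identification of the first two paragraphs is granted.

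The step I expect to be the main obstacle is not this bookkeeping but the well-definedness of the restricted sum: the unrestricted spin foam sum (\ref{eq:fullphysicalampl}) is itself only a formal, refinement-limit object, so one must argue that intersecting it with a diff-invariant subclass $c_\alpha$ still yields something that converges (or can be regularised exactly as the full amplitude is) and that the limit $\mathcal{C}\to\infty$ commutes with the restriction to $c_\alpha$. This in turn forces a mild admissibility condition on $\alpha$: membership in $c_\alpha$ must be decidable from the bulk data a spin foam actually carries (two-complex plus colourings), so that ``$(\mathcal{C},\sigma)\in c_\alpha$'' is a well-posed restriction of the summation range. I would take this to be part of what it means for $\alpha$ to be an admissible coarse graining — exactly as ``passing through a region of $\mathcal{H}_{kin}$'' (the Remark) or through an interval $\Delta$ in the single-particle example is an admissible condition — and flag the convergence question as the substantive open point.
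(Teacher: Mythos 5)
Your proposal is correct, and it actually contains the paper's entire proof as one of your ``consistency checks'': the published proof consists solely of the observation that summing $\langle \psi_\text{f}|C_\alpha|\psi_\text{i}\rangle$ over all $\alpha$ collapses the restricted sums $\sum_{\mathcal{C}_\alpha,\sigma_\alpha}$ into the unrestricted sum $\sum_{\mathcal{C},\sigma}$, thereby recovering the full physical transition amplitude $\langle\psi_\text{f}|\psi_\text{i}\rangle_{phys}$ --- which is exactly the sum rule in your third paragraph. Where you differ is that you go further: you derive the formula (\ref{eq:coarsegraining}) from the abstract definition (\ref{eq:classop}) via the dictionary identifying a fine grained history with a coloured two-complex carrying the amplitude (\ref{eq:finegrained}), you reduce to spin-network basis states by linearity, and you note explicitly that the single-particle factorisation (\ref{eq:oneparticleclassop2}) has no diff-invariant analogue; the paper instead posits the formula and verifies only exhaustiveness. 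What each approach buys: the paper's check is the minimal thing needed to show the $C_\alpha$ form a legitimate exhaustive decomposition of the propagator, while your derivation makes clear \emph{why} the restricted spin foam sum is the right object in the first place and surfaces the two genuinely substantive issues the paper leaves silent at this point --- that the restricted sum inherits the convergence/regularisation problems of the full sum (\ref{eq:fullphysicalampl}) and must commute with the refinement limit, and that $\alpha$ must be decidable from bulk data for the restriction of the summation range to be well posed. These are honest open problems of the construction, not gaps in your argument; nothing in your proposal conflicts with the paper, and you prove strictly more than it does.
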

\begin{remark} The restricted path integral in equation (\ref{eq:coarsegraining}) is the sum over all those paths which satisfy the conditions encoded in $\alpha$; equivalently, the sum over all those fine grained histories which `pass through' a region of $\mathcal{H}_{kin}$ which is selected by $\alpha$.
\end{remark}
\begin{proof} Equation (\ref{eq:coarsegraining}) is a sum over histories rectricted by the diff-invariant partition $\{c_{\alpha}\}$. Summing over all $\alpha$ 
\begin{align}
	\sum_{\alpha} \langle \psi_\text{f} |C_{\alpha} | \psi_\text{i} \rangle &=& \sum_{\alpha} \sum_{\mathcal{C}_{\alpha},\sigma_{\alpha}} \prod_f (2j_f + 1) \prod_v A_v(\sigma) \nonumber \\
  & = & \sum_{\mathcal{C},\sigma} \prod_f (2j_f + 1) \prod_v A_v(\sigma) \nonumber \\
	& = &  \langle \psi_\text{f} |  \psi_\text{i} \rangle_{phys} \qquad \qquad \qquad 
\label{eq:proof1}
\end{align}
gives the full physical transition amplitude between the initial state $\psi_\text{i}$ and the final state $\psi_\text{f}$. $\blacksquare$
\end{proof}
\subsection{Decoherence functional}
A decoherence functional is readily constructed from the class operator above by multiplying the restricted amplitudes (\ref{eq:coarsegraining}) and summing over all possible boundary states $\psi = \psi_\text{f} \otimes \psi_\text{i}$. 

\begin{proposition} \emph{(Decoherence functional.)}
Let $f \in \mathfrak{f}$ denote the fine grained history space, and suppose that the kinematical Hilbert space has the structure $\mathcal{H}_{kin} = \mathcal{H}_{\text{\emph{f}}}^* \otimes \mathcal{H}_{\text{\emph{i}}}$. Further suppose that $\psi_\text{\emph{f}} \otimes \psi_\text{\emph{i}} \in \mathcal{H}_{kin}$ denote the boundary states of a spin foam with transition amplitude $W(\psi_\text{\emph{f}}, \psi_\text{\emph{i}}) = \langle \psi_\text{\emph{f}} | \psi_\text{\emph{i}} \rangle_{phys}$. Let $\{c_{\alpha}\}$ denote an exhaustive diff-invariant set of exclusive classes $c_{\alpha}$ of histories $f \in \mathfrak{f}$ specified by lists $ \alpha  = (\alpha_1,...,\alpha_n)$ of diff-invariant properties. Then, the decoherence functional for coarse grained histories of spin networks is given by
\begin{align}
	D(\alpha, \alpha') = \mathcal{N} \sum_{\psi_\text{\emph{f}} \otimes \psi_\text{\emph{i}}} \langle \psi_\text{\emph{f}} |C_{\alpha} | \psi_\text{\emph{i}} \rangle \langle \psi_\text{\emph{i}} |C_{\alpha'} | \psi_\text{\emph{f}} \rangle, 
	\label{eq:generaldecfunc}
\end{align}
for a suitable normalization $\mathcal{N}$.
\end{proposition}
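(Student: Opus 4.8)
The plan is to treat (\ref{eq:generaldecfunc}) as the definition of $D$ and to verify the four defining properties of a decoherence functional, together with its reduction to the branch-state-vector overlap $D(\alpha,\alpha')=\langle\psi_\alpha|\psi_{\alpha'}\rangle$ of the earlier definition. Equation (\ref{eq:generaldecfunc}) is the spin-network specialisation of Hartle's general-relativistic functional (\ref{eq:grdecfunc}) with all density-matrix weights $p_i',p_j$ set equal, so most steps are bookkeeping. The single nontrivial input is the conclusion of the preceding proposition, which in matrix-element form reads $\sum_\alpha\langle\psi_\text{f}|C_\alpha|\psi_\text{i}\rangle=\langle\psi_\text{f}|\psi_\text{i}\rangle_{phys}$ for every boundary state $\psi_\text{f}\otimes\psi_\text{i}$ (cf.\ (\ref{eq:proof1})).

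First I would make the ``glued'' structure of (\ref{eq:generaldecfunc}) explicit, writing $M^{(\alpha)}_{\text{fi}}:=\langle\psi_\text{f}|C_\alpha|\psi_\text{i}\rangle$ for the restricted amplitude. Because reversing the orientation of a two-complex complex-conjugates its spin-foam amplitude --- equivalently, because the physical inner product is Hermitian, $\langle\psi_\text{i}|\psi_\text{f}\rangle_{phys}=\overline{\langle\psi_\text{f}|\psi_\text{i}\rangle_{phys}}$ --- and because the bulk restriction encoded in a list $\alpha'$ of diff-invariant properties is insensitive to that reversal, one has $\langle\psi_\text{i}|C_{\alpha'}|\psi_\text{f}\rangle=\overline{M^{(\alpha')}_{\text{fi}}}$, so that $D(\alpha,\alpha')=\mathcal{N}\sum_{\psi_\text{f}\otimes\psi_\text{i}}M^{(\alpha)}_{\text{fi}}\overline{M^{(\alpha')}_{\text{fi}}}$. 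Reading $M^{(\alpha)}$ as a vector in $\ell^2$ over the chosen basis of $\mathcal{H}_{kin}$, this exhibits $[D(\alpha,\alpha')]$ as $\mathcal{N}$ times a Gram matrix; hence Hermiticity, positivity of the diagonal, and in fact positive-semidefiniteness of the whole matrix follow at once provided $\mathcal{N}\geq 0$, and the identification with the branch-state-vector overlap --- extended to the maximally mixed boundary ensemble, exactly as in (\ref{eq:grdecfunc}) --- follows on recognising the basis sum as a trace.

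Next I would handle the superposition principle and the normalisation. For the former: since $c_{\bar\alpha}=\bigsqcup_{\alpha\in\bar\alpha}c_\alpha$, the sum over bulk configurations in (\ref{eq:coarsegraining}) is additive, $M^{(\bar\alpha)}_{\text{fi}}=\sum_{\alpha\in\bar\alpha}M^{(\alpha)}_{\text{fi}}$, and substituting into the $\ell^2$ form of $D$ and expanding bilinearly gives $D(\bar\alpha,\bar\alpha')=\sum_{\alpha\in\bar\alpha}\sum_{\alpha'\in\bar\alpha'}D(\alpha,\alpha')$. For normalisation, summing over all $\alpha,\alpha'$ and using $\sum_\alpha M^{(\alpha)}_{\text{fi}}=\langle\psi_\text{f}|\psi_\text{i}\rangle_{phys}$ yields $\sum_{\alpha,\alpha'}D(\alpha,\alpha')=\mathcal{N}\sum_{\psi_\text{f}\otimes\psi_\text{i}}|\langle\psi_\text{f}|\psi_\text{i}\rangle_{phys}|^2$, which fixes $\mathcal{N}$ as the reciprocal of that sum; this value is real and non-negative, so it is consistent with the positivity argument above.

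The main obstacle is the well-definedness of $\mathcal{N}$. Its value involves, first, the sum over two-complices inside each physical amplitude $\langle\psi_\text{f}|\psi_\text{i}\rangle_{phys}$, whose convergence is not established in the spin-foam programme, as the paper itself notes; and second, the basis sum $\sum_{\psi_\text{f}\otimes\psi_\text{i}}|\langle\psi_\text{f}|\psi_\text{i}\rangle_{phys}|^2$ over a typically infinite basis of $\mathcal{H}_{kin}$, which need not converge and is basis-dependent unless a genuine trace exists. I would therefore present the proposition as holding formally, with $\mathcal{N}$ treated exactly as in Hartle's general-relativistic case (\ref{eq:grdecfunc}): everything except exact normalisation to unity is rigorous given only $\sum_\alpha C_\alpha=P$, and I would flag that a regularisation (truncation to finite graphs/two-complices, or restriction to a superselection sector) is needed to make $\mathcal{N}$ --- and hence the probabilities (\ref{eq:prob}) --- literally well-defined. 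A secondary, more minor point is the orientation/complex-conjugation convention used in passing from $\langle\psi_\text{i}|C_{\alpha'}|\psi_\text{f}\rangle$ to $\overline{M^{(\alpha')}_{\text{fi}}}$, which should be stated explicitly, since the behaviour of the spin-foam amplitude under orientation reversal is precisely what underlies Hermiticity and positivity here.
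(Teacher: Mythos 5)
Your proposal is correct at the level of formality the paper operates at, and it follows the same overall strategy---take (\ref{eq:generaldecfunc}) as the definition and verify the axioms of a decoherence functional---but the route through the individual axioms is genuinely different and substantially tighter than the paper's. The paper's proof simply asserts that the expression is ``manifestly hermitian,'' derives positivity from ``the positivity of the physical inner product,'' and obtains the superposition principle from the qualitative observation that each class $c_\alpha$ contains many fine grained histories, without ever addressing normalisation. Your Gram-matrix argument does more work: it makes explicit that hermiticity and positivity both hinge on the conjugation relation $\langle\psi_\text{i}|C_{\alpha'}|\psi_\text{f}\rangle=\overline{\langle\psi_\text{f}|C_{\alpha'}|\psi_\text{i}\rangle}$, i.e.\ on the behaviour of the restricted spin-foam amplitude under orientation reversal---a hypothesis the paper uses tacitly and never states. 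Likewise, your derivation of the superposition principle from the additivity of the bulk-configuration sum in (\ref{eq:coarsegraining}) plus bilinearity is the argument the paper's equation (\ref{eq:superpositionfc}) actually needs; the paper's stated justification (multiplicity of fine grained histories, superposition structure of coherent states) does not by itself yield the identity. Finally, your treatment of $\mathcal{N}$---fixing it by the normalisation axiom via $\sum_\alpha C_\alpha = P$ from the preceding proposition, and flagging that its well-definedness requires control over both the two-complex sum and the boundary-state sum---addresses a condition the paper's proof omits entirely, and your caveat about regularisation is consistent with the paper's own admission that these sums are ``not presently well-understood.'' In short: same skeleton, but your version supplies the actual arguments and identifies the hidden assumptions; nothing in it conflicts with the paper.
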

\begin{proof}
The expression (\ref{eq:generaldecfunc}) is manifestly hermitian; positivity follows from the positivity of the physical inner product $\langle \cdot | \cdot \rangle_{phys}$. 

To show that the superposition condition holds, suppose that (as before), $\mathfrak{f}$ denotes the space of fine grained spin network histories and $f, f'\in \mathfrak{f}$. For any $\alpha \in \{c_{\alpha}\}$, there will generally be a multitude of fine grained histories $f$ compatible with $\alpha$ such that $f \in \alpha$. Furthermore, as noted above, coherent spin network states are generally superpositions of spin network basis states. Thus, it follows that 
\begin{equation}
	D(\alpha, \alpha') = \sum_{f \in \alpha} \sum_{f' \in \alpha'} D(f, f').
\label{eq:superpositionfc}
\end{equation}
Finally, if the RHS of the expression 
\begin{equation}
D(\alpha, \alpha) = \sum_{f \in \alpha} \sum_{f'\in \alpha} D(f, f')
\end{equation}
vanishes, then this denotes the probability of the coarse grained history $\alpha$. $\blacksquare$
\end{proof}
Given the decoherence functional in its general form (\ref{eq:generaldecfunc}), the physically interesting question is for which choice of coarse graining it vanishes. This will involve computing sums of truncated transitions over large numbers of two-complices, which is not presently well-understood; instead, current focus is on the computation these amplitudes under certain approximations (see section \ref{qchistories}).
\section{Coarse grainings of spin network histories} \label{coarsegraining}
\subsection{Cosmological coarse graining}
Assertions about certain values for extrinsic and intrinsic curvature clearly are an example of such a coarse graining. This is a natural choice, since intrinsic and extrinsic curvature are the variables of the classical phase space of general relativity. In order to implement this coarse graining on loop gravity, let me stress the following:
\begin{enumerate}
	\item As generic states of a given geometry are given by superpositions of spin network states, one needs to specify what it means to talk about states having particular values for extrinsic and intrinsic geometry. This is accomplished by coherent states as introduced in section I.
	\item The kinematical Hilbert space of LQG is spanned by coherent states of the form (\ref{eq:coherent}). This is due to the overcompleteness property of coherent states. Hence without loss of generality we can use the overcomplete basis of coherent states to give boundary states of fine grained histories.
	\item The set of states which satisfy the peakedness property but not the annihilation operator, overcompleteness and minimal uncertainty properties is generally larger than the set of coherent states. However, following the literature I work with coherent states as these are precursors to semiclassical states.
\end{enumerate} 
Using the machinery of coherent states described above, one can partition the kinematical diff-invariant Hilbert space of loop gravity into classes for which states are peaked around certain values of intrinsic and extrinsic curvature.
\begin{definition} \emph{(Cosmological coarse graining.)}\\
The diffeomorphism-invariant partition of the set of fine grained histories of spin networks in the kinematical Hilbert space $\mathcal{H}_{kin}$ of loop gravity given by the specification of intrinsic and extrinsic curvature of a three-geometry is implemented by considering coherent spin network states peaked on $SL(2,\mathbb{C})$ values $H_l = e^{2itL_l}h_l$. Call this the \emph{cosmological coarse graining}.
\end{definition}

Consider the Hilbert space $\mathcal{H}_{kin}:= \mathcal{H}_\text{f}^*  \otimes \mathcal{H}_\text{i} $, and suppose that $\psi_\text{f}  \otimes \psi_\text{i} , \psi_\text{f} '\otimes \psi_\text{i} ' \in \mathcal{H}_{kin}$. Fine grained histories of these spin network states are specified by spin foam transitions on two-complices $\mathcal{C}, \mathcal{C}'$, respectively. Suppose that we implement the cosmological coarse graining on the space of fine grained histories $\mathfrak{f}$. The diff-invariant properties are encoded in the coherent states $ \psi_{H_l},  \psi_{H'_l}$ coherent states peaked on values $h_l, h'_l \in SU(2)$ and $L_l, L'_l \in \mathfrak{su}(2)$, respectively. 

Explicitly, suppose that we coarse grain by asking for the amplitude of a state to be peaked on the values $(h_l, L_l)$ at least once along its history. This is given by the expression
\begin{equation}
		\langle \psi_\text{f} |C_{\alpha} | \psi_\text{i} \rangle = \sum_{\mathcal{C}_{\alpha},j_f, H_l} \prod_f (2j_f + 1) \prod_v A_v(H_l),
\label{eq:cosmclassop}
\end{equation} 
where the sum stretches over all those bulk configurations such that there is \emph{at least one} vertex amplitude $A_v(H_l)$ in the product for which $H_l = e^{2itL_l}h_l$. Two such histories $\alpha, \alpha'$ for which the coherent state values are $(h_l, L_l), (h'_l, L'_l)$, respectively, decoherence is measured by the function 
\begin{align}
	D(\alpha, \alpha') = \mathcal{N} \sum_{\psi_\text{f}  \otimes \psi_\text{i} } \langle \psi_\text{f}  |C_{\alpha} | \psi_\text{i}  \rangle \langle \psi_\text{i}  |C_{\alpha'} | \psi_\text{f}  \rangle \nonumber \\
	 =  \mathcal{N} \sum_{\psi_f \otimes \psi_i} \sum_{\mathcal{C}_{\alpha},j_f, H_l} \prod_f (2j_f + 1) \prod_v A_v(H_l) \nonumber \\ \times \sum_{\mathcal{C}_{\alpha'},j'_f, H'_l} \prod_f (2j'_f + 1) \prod_v A_v(H'_l)
\label{eq:cohstatedf}
\end{align}
for any $\alpha, \alpha' \in \{c_\alpha \}$ and a suitable normalization $\mathcal{N}$.
It is not immediately obvious whether coarse grained histories of coherent spin networks peaked on specific intrinsic and extrinsic geometries decohere. However, we know independently that coherent states peaked on different such values are kinematically orthogonal (or approximately so). Hence, cosmologically coarse grained histories decohere if the respective values of intrinsic and extrinsic curvature around which they are peaked are sufficiently far apart.

\subsection{Volume coarse graining}
Here, I briefly mention another possibility. A coarse graining of histories by particular values of volume have been considered in \citep{Hartle:1992as, Craig2011}. Explicitly, consider the coarse graining which consists in a partition of the history space into the following classes:\\

$\tilde{c}$: The class of metrics for which all spacelike three-surfaces have volumes less than a fiducial volume $V_0$.\\
$\tilde{c}'$: The class of metrics for each of which there is at least one three-surface with a volume larger than $v_0$.

In loop gravity, spin network states diagonalise the volume operator, the spectrum of which is discrete. Thus, we can straightforwardly translate the above proposal to loop gravity via the following partition of the spin network state space:\\

\begin{definition} \emph{(Volume coarse graining.)}\\
The diffeomorphism-invariant partition of the set of fine grained histories of spin networks in the kinematical Hilbert space $\mathcal{H}_{kin}$ of loop gravity given by the specificiation of a fiducial volume $V_0$ defines a \emph{volume coarse graining} of spin network histories into the following classes:\\
$c$: The class of fine grained spin network histories, all of which consist of spin networks with volume eigenvalues less than a fiducial volume $V_0$;\\
$c'$: The class of fine grained spin network histories in which there is at least one spin network with eigenvalue larger than $V_0$.
\end{definition}

This gives another example of a well-defined coarse graining in loop gravity that has physical meaning. The details of this prescription will depend on the specifics of the spectrum of the volume operator.

\section{Quasiclassical trajectories}\label{qchistories}
Specifying a fine and coarse grained state space as well as a suitable decoherence functional completes the definition of a generalised quantum mechanics, and in particular a coarse grained dynamics. Coarse graining, viz. the partitioning of the history space into diff-invariant classes, amounts to the specification of certain macroproperties for that system. For the theory to have the right coarse grained behaviour, its coarse grained dynamics should thus follow quasi-classical dynamics in the appropriate limit. In the following, I show that there is good evidence for this in case of cosmological coarse graining of spin network histories.

As mentioned above, we presently do not have a good understanding of the infinite sum over bulk configurations needed in calculating the full physical transition amplitude of a boundary spin network. However, it is possible to compute the transition amplitude approximately by implementing what is referred to as the \emph{vertex expansion}. As detailed in \cite[pp.~14]{Oriti:2001qu}, there is another interpretation of the amplitude (\ref{eq:fullphysicalampl})
\begin{equation}
	\langle W | \psi \rangle = \sum_{\mathcal{C}}\sum_{j_f,\text{v}_e} \prod_f (2j_f + 1) \prod_v A_v(j_f, \text{v}_n)
\end{equation}
This can be viewed as an \emph{expansion} of the full physical transition amplitude in orders of vertex amplitudes, similar to the expansion in Feynman graphs. The result is a sum over contributions with increasing numbers of vertex amplitudes. As in the case of standard QED, one takes the leading order as the dominant contributions (for details, see \citep{Rovelli:2011eq, Perini2009}).

In their \cite{Bianchi:2010zs}, Bianchi \emph{et al.} consider the spin foam\footnote{In the following, I only consider the case where the cosmological constant is zero. The non-vanishing case is largely similar; details can be found in \cite{Bianchi:2011ym}.} on a two-complex $\mathcal{C}$ with disconnected boundary of two `dipole graphs' $\Delta_2^*$. These graphs consist of two four-valent nodes $ \{n_1, n_2\} $ and connected by four links $ \{ l_1, l_2, l_3, l_4 \} $. Consider the Hilbert space $\mathcal{H}_{ \Delta_2^*}$ of coherent spin network states on the dipole graphs, labelled by $SL(2,\mathbb{C})$-elements $H_l = e^{i L_l} h_l$, for $h_l \in SU(2)$ and $L_l \in \mathfrak{su}(2)$. There are three steps taken along the way before the classical limit is imposed to obtain quasiclassical trajectories: (A) homogeneity and isotropy, (B) the vertex expansion and (C) the volume expansion.
\subsection{Homogeneity and Isotropy}
First, note that there is an alternative decomposition of $SL(2,\mathbb{C})$-labels according to
\begin{equation}
	H_l = n_{s(l)} e^{-i(\xi_l + i \eta_l) \frac{\sigma_3}{2}} n_{t(l)}^{-1},
\label{eq:sl2decomp2}
\end{equation}
where $\vec{\sigma} = \{ \sigma_i \}, i = 1,2,3$ are the Pauli matrices and $n_l \in SU(2)$. The geometric interpretation of the quadruple $(n_s, n_t, \xi, \eta)$ is explained by Freidel and Speziale in \cite{Freidel:2010aq}: Given appropriate four-valent states with intrinsic and extrinsic curvature, the $n_s, n_t$ are 3d-normals to the triangles of the tetrahedra bounded by the triangle; $\eta$ is the area of the triangle divided by $8 \pi \gamma G \hbar$; and $\xi$ is the sum of the extrinsic curvature at the triangle and the 3d rotation due to the spin connection at the triangle. 

The authors impose homogeneity and isotropy on the boundary states by computing $(h_l, L_l)$ to be 
\begin{equation}
	U_l = n_l e^{i\alpha c \frac{\sigma_3}{2}}n^{-1}_l, \text{ } E_l = -i n_l \frac{2 \pi G \gamma}{t}\beta p \frac{\sigma_3}{2}n^{-1}_l,
\label{eq:homogiso}
\end{equation}
where $n_l$ are $SU(2)$ group elements such that \begin{equation}
n_l \sigma_3 n^{-1}_l = \vec{n}_l \cdot \vec{\sigma}, 
\label{eq:00}
\end{equation}and $\alpha, \beta$ are constants. This entails that $n_{s(l)} = n_{t(l)} = n_l$ and 
\begin{equation}
	\xi_l = \xi = \alpha c, \text{ } \eta_l = \eta = \beta p,
\label{eq:gfhj}
\end{equation}
such that 
\begin{equation}
	H_l (\xi, \eta) = n_l e^{-i (\xi + i\eta) \frac{\sigma_3}{2}} n_l^{-1}
\label{eq:0}
\end{equation} The fact that neither $\xi$ nor $\eta$ depend on $l$ can be seen as the effect of isotropy and the equality (\ref{eq:00}) as the result of homogeneity. Homogenous and isotropic coherent states, then, are labelled by $c = \xi / p$ and $p = \eta / \beta$.

We want to obtain the amplitude for initial and final states labelled by $(\xi_\text{i} , \eta_\text{i} )$ and $(\xi_\text{f} , \eta_\text{f} )$, respectively,
\begin{equation}
	W(\xi_\text{i}, \eta_\text{i} ; \xi_\text{f}, \eta_\text{f}) = W(H_l (\xi_\text{i}, \eta_\text{i}), H_l (\xi_\text{f}, \eta_\text{f}))
\label{eq:1}
\end{equation}
which is holomorphic function of $z_\text{i}, z_\text{f} $ where 
\begin{equation}
	z = \xi + i \eta
\label{eq:2}
\end{equation}
Thus, rewrite the above as
\begin{equation}
	W(z_\text{i}, z_\text{f}) = W(\xi_\text{i}, \eta_\text{i}; \xi_\text{f} , \eta_\text{f} )
\label{eq:3}
\end{equation}
Further modifying the notation, denote
\begin{equation}
	\psi_z(h_l) := \psi_{H_l(z(c,p))}(h_l) := \langle h_l | z \rangle
\label{eq:4}
\end{equation}
using which we obtain the expression
\begin{equation}
	W(z_\text{i}, z_\text{f} ) = \langle \bar{z}_\text{f}  | z_\text{i} \rangle_{physical},
\label{eq:5}
\end{equation}
where, as before, the bracket $\langle \cdot | \cdot \rangle_{physical}$ has the interpretation of the physical inner product.
\subsection{Vertex expansion}
The amplitude (\ref{eq:3}) is computed to \emph{first order} in the vertex expansion; that is, the amplitude is given by a spin foam formed by a single vertex connected to the four boundary nodes by internal lines (see figure \ref{fig:vertexexp}).
\begin{figure}[htb]
\includegraphics[scale=0.8]{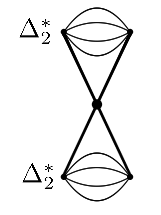}
\caption{Transition amplitude between two `dipole' graphs in the one-vertex expansion. Internal lines are drawn thicker. (Figure taken from \cite{Vidotto:2011qa}).}
\label{fig:vertexexp}
\end{figure}
The amplitude in this approximation for a single vertex $v$ is given by
\begin{equation}
	W(z_\text{i}, z_\text{f}) = W_{v} (H_l(z_\text{i}), H_l(z_\text{f}))
\label{eq:6}
\end{equation}

\subsection{Large volume expansion}
This amplitude can be computed in the limit where the universe is large. This is given by taking large $p$ in equation (\ref{eq:0}). After some details which can be viewed on p. 5 in \cite[p.~03]{Bianchi:2010zs}, this yields
\begin{equation}
	W(z_\text{i}, z_\text{f}) = N^2 z_\text{i} z_\text{f} e^{- \frac{1}{2t\hbar } (z_\text{i}^2 + z_\text{f}^2)}
\label{eq:7}
\end{equation}
wherein $N$ effectively denotes the norm squared of the Livine-Speziale coherent regular tetrahedron. (\ref{eq:7}) is the transition amplitude between two cosmological homogenous isotropic coherent states.
\subsection{Classical limit}%
The transition amplitude lies in the kernel of the quantum operator
\begin{equation}
	\hat{H}:= \lambda \left ( z^2 -t^2 {\hbar}^2 \frac{d^2}{dz^2} - 3 t \hbar \right )^2,
\label{eq:8}
\end{equation}
where $\lambda$ is a constant. This can be rewritten, using the identification
\begin{equation}
	\hat{z} = z, \text{   } \hat{\bar{z}} = t \hbar \frac{d}{dz}
\label{eq:9}
\end{equation}
as the following:
\begin{equation}
	\hat{H} = \lambda (\hat{z}^2 - \hat{\bar{z}}^2 - 2)^2 
\label{eq:10}
\end{equation}
In the classical limit, we replace operators by classical variables and then take the above equation for large p. This yields (after substituting via equation \ref{eq:2})
\begin{eqnarray}
	H &= &-\frac{3}{8\pi G \gamma ^2} \sqrt{p}c^2 = 0\\
	 &= & - \frac{3}{8\pi G} \dot{a}^2a = 0
\label{eq:11}
\end{eqnarray}
where $c=\gamma \dot{a}$ and $p = a^2$. This is the equation for the standard Friedmann cosmology for an empty universe that is either flat or has no volume. This has been generalised to include the cosmological constant (cf. \cite{Bianchi:2011ym}).

The following has happened in this section: First, the degrees of freedom of the theory were truncated to a finite graph (the dipole); secondly, the fine grained state space of spin networks on this graph was coarse grained by considering coherent states peaked on specific values of extrinsic and intrinsic curvature. Thirdly, homogeneity and isotropy were imposed on the state space. Lastly, the one-vertex and large volume expansion were made, before taking the classical limit. 

Hence, for a coarse graining of the fine grained history space into classes corresponding to different values of extrinsic and intrinsic curvature, the coarse grained dynamics are quasiclassical, provided appropriate expansions are made and the appropriate limit is taken. 

\section{Discussion and Outlook}
In this paper, I have proposed a decoherent histories formulation for loop gravity in the spin foam formalism. A cosmological and volume coarse graining were examined, and it was shown that histories coarse grained according to the former follow quasi-classical trajectories given by Friedman cosmology (resp. de-Sitter cosmology for a non-vanishing cosmological constant).

The central point of this paper is that covariant (spin foam) loop gravity provides a natural framework in which to give precise meaning to previously ill-understood formal expressions regarding decoherent histories in quantum general relativity.

In a series of recent papers, Jonathan Halliwell and James Yearsley point out that the standard definition of class operators as a string of projectors is prone to the Quantum Zeno effect and may need to be modified by the introduction of a complex potential \citep{Halliwell2009, Halliwell2009a, Halliwell2010}. Relatedly, in their \cite{Halliwell2012} the authors remark that a straightforward definition of the class operator via path integrals as used in the present paper may be unphysical as it would also be subject to the Quantum Zeno effect. These issues will be addressed in an upcoming paper \cite{Schroeren2013} on Halliwell-style class operators and the quantum Zeno effect in loop gravity.

The results of this article are tentative to the extent that spin foams form part of an actively evolving research programme for which many more or less substantial amendments and revisions are to be expected. Nonetheless, it should the subject of future work to investigate the conceptual implications of these results for foundational problems in quantum theory.

\section*{Acknowledgements}
I am indebted to my supervisor Carlo Rovelli, without whom this work would not have been possible. In addition, I would like to thank James Yearsley, Jonathan Halliwell, Edward Anderson, Petros Wallden, Kinjalk Lochan, Ed Wilson-Ewing, Simone Speziale, Aldo Riello, Wolfgang Wieland, as well as Leonard Cottrell for helpful comments and discussions. I am supported by the German National Academic Foundation (\emph{Studienstiftung des deutschen Volkes}).


\end{document}